\renewcommand{\r}{\hat\rho}
\newcommand{\T}{\mathcal T}
\newcommand{\U}{\mathcal U}
\renewcommand{\O}{\mathcal O}
\newcommand{\F}{\mathcal F}
\newcommand{\D}{\mathcal D}
\newcommand{\s}{\hat\sigma}
\newtheorem{theorem}{Theorem}
\begin{document}

\title{Dealing with unknown quantum operations}
\author{Filippo Miatto}
\email{fmiatto@uottawa.ca}
\affiliation{SUPA and Department of Physics, University of Strathclyde, Glasgow G4 0NG, Scotland, U.K.}
\affiliation{Current address: Dept. of Physics, University of Ottawa, 150 Louis Pasteur, Ottawa, Ontario, K1N 6N5 Canada}
\date{\today}
\begin{abstract}
In the context of quantum communications between two parties (here Alice and Bob), Bob's lack of knowledge about the communications channel can affect the purity of the states that he receives. The operation of applying an unknown unitary transformation to a state, thus reducing its purity, is called ``twirling''. As twirling affects the states that Bob receives, it also affects his perception of the operations that Alice applies to her states. In this work we find that not every operation is representable after a twirling, we show the minimal requirement for this to be possible, and we identify the correct form of the ``twirled'' operations.
\end{abstract}
\maketitle

\section{Introduction}
Whenever we lack information about which state a quantum system is in, its description is that of a mixed state. Analogously, if we do not know what channel a state is traveling in, at the output it will be mixed. In this context, the operation that decreases the purity of an input state is called ``twirling''.

Twirling finds interesting applications in quantum information, from entanglement distillation \cite{Bennett1996,Bennett1996b}, to the study of entanglement measures \cite{Vollbrecht, Lee, Hayashi}, distillability \cite{Dur}, to particular techniques such as secret sharing \cite{Vlad2012} and others \cite{gour2012, marvian, BRST}.
An analogous way of looking at the same problem is in the context of reference frames \cite{BRS, GS}. 

An interesting consequence of twirling is that states that appear different for the sender might be indistinguishable for the receiver. For this reason, twirling and superselection (the framework that describes the inability of measuring some states and operations) are deeply intertwined.

In this Letter we study how an operation applied by the sender is perceived on the receiving side, given any knowledge (even partial), of the channel connecting the two parties.

\section{Twirling of states}
The framework that we consider is the following: Alice prepares a state $\r$ and sends it to Bob. The unitary channels that she can employ, $\U_g$, are parametrized by elements $g$ of a group $G$. We call the state that Bob receives $\s$, which can eventually be equal to $\r$. Bob, not knowing which of all the possible channels $\r$ went through, is forced to describe $\s$ as a mixture, according to every possible channel:
\begin{align}
\s =\int_{G}dg\, w(g)\,\U_g[\r]:=\T_w[\r],
\label{definition}
\end{align}
where $w(g)$ is the probability density over the channels, and it is this function that encodes Bob's knowledge about which channel Alice might be using.
We indicate this operation by $\T_w$, with the subscript reminding us that it depends on the probability density $w(g)$.

We have two limiting cases: the case of complete knowledge, where $w(g)=\delta(g-\tilde g)$ for some $\tilde g\in G$, and the case of minimal knowledge, where $w(g)=1$. In these limiting cases Bob receives either the pure state $\T_\delta[\r]=\U_{\tilde g}[\r]$, or the completely incoherent mixture $\T[\r]=\int_G\U_{g}[\r] \,dg$. In the general case of partial knowledge, Bob receives a partially mixed state, effectively obtained from the initial state by multiplying its off-diagonal entries by numbers of modulus not larger than 1: the representation of $G$ on the state space, being unitary, induces a Fourier transform, or a Fourier series, of $w(g)$ on the manifold that underlies $G$.

As a simple example consider $G=U(1)$, the twirling of a density matrix in the eigenbasis of some selfadjoint operator $\hat L$ is
\begin{align}
\T_w[\r]&=\frac{1}{2\pi}\sum_{m,n}r_{m,n}\int_0^{2\pi}d\phi\, w(\phi) \,e^{i\phi\hat L}\,|m\rangle\langle n| \,e^{-i\phi\hat L}\nonumber\\
&=\frac{1}{2\pi}\sum_{m,n}r_{m,n}\int_0^{2\pi}d\phi\, w(\phi) \,e^{i\phi(m-n)}|m\rangle\langle n|\nonumber\\
&=\sum_{m,n}r_{m,n}\tilde w_{m-n}|m\rangle\langle n|.
\label{fourier}
\end{align}
The numbers $\tilde w_{m-n}$ are the Fourier coefficients of $w(g)$, so only the diagonal is always safe from alteration, as $w_0=1$. The net effect is that in general the purity of the state that Alice sends, which initially is $\sum_{m,n}|r_{m,n}|^2$ becomes $\sum_{m,n}|r_{m,n}|^2|\tilde w_{m-n}|^2$, so as $|\tilde w_{m-n}|^2\leq1$, it cannot increase, and in general it is reduced.

We show now some properties satisfied by twirling.
The first property is that sequential twirlings can be composed. The second property is that twirling behaves in an associative way, i.e. $(\T_{w''}\T_{w'})\T_w=\T_{w''}(\T_{w'}\T_w)$. To avoid clutter in the equations, we omit the symbol ``$\circ$'', with the rule that the writing $\mathcal A\mathcal B$ means ``$\mathcal A$ after $\mathcal B$''.

Let's consider the composition of two twirlings $\T_w\T_{w'}$ and check if it still corresponds to a twirling $\T_v$ for some probability distribution $v(g)$. Using the properties of the Haar measure, it is easy to see that it is the case, by explicitly calculating $v(g)$:
\begin{align}
(\T_{w'}\T_w)[\r]&=\int_{G}dg\,dh\, w'(h)w(g)\,\U_{hg}[\r]\\
&=\int_{G}dg\,dh\,\,w'(h)w(h^{-1}g)\U_{g}[\r] .
\end{align}
So the composition of the two initial twirlings is equivalent to a twirling with respect to the probability distribution
\begin{align}
v(g)=\int_G\,dh\,w'(h)w(h^{-1}g):=(w'*w)(g).
\label{newprob}
\end{align}
Eq. \eqref{newprob} is the definition of convolution on a group, and it always yields an allowed probability distribution. Therefore we have the composition rule
\begin{align}
\T_w\T_{w'}=\T_{w*w'},
\label{composition}
\end{align}
where the symbol ``$*$'' means convolution on the group $G$.
Now we can easily  prove associativity:
\begin{align}
(\T_{w''}\T_{w'})\T_w&=\T_{w''*w'}\T_w=\T_{(w''*w')*w}\\
&=\T_{w''*(w'*w)}=\T_{w''}\T_{w'*w}\\
&=\T_{w''}(\T_{w'}\T_w),
\label{assoc}
\end{align}
where we used the composition rule \eqref{newprob} and the fact that convolution is associative, so as a consequence also twirling is.

\section{Twirling of operations}
Alice now modifies a copy of her state via a quantum operation $\O$ before sending it to Bob: $\hat \rho'=\O[\r ]$, so Bob receives the states $\s=\T_w[\r]$ and $\s'=(\T_w\O)[\r]$:
\begin{align}
\label{diagram1}
\xymatrix@R=3em@C=3em{
\r\ar[r]^\O\ar[d]_{\T_w}&\r'\ar[d]^{\T_w}\\
\s\ar[r]_{\O'}&\s'
}
\end{align}

The natural question to ask is: what is the operation $\O'$ (if there is any) that transforms $\s $ into $\s'$? In other terms, how does Bob perceive Alice's operation $\O$? This question is equivalent to the problem of finding a map $\O'$ that, given $\O$ and $\T_w$, makes the diagram commute: $\O'\T_w=\T_w\O$.

If Bob knew the channel employed, $w(g)$ would be a delta function centred on a specific $\tilde g$, and it would be one of the limiting cases, $\T_\delta=\U_{\tilde g}$. In this case applying $\O'$ would be equivalent to applying the following sequence: $\U_{\tilde g^{-1}}$ to turn $\s $ into $\r $, then  $\O$ to obtain $\r '$ and finally $\U_{\tilde g}$ to obtain $\s '$. Put simply,
\begin{align}
\O' = T_\delta\O\T_\delta^{-1}=\U_{\tilde g}\O\U_{\tilde g^{-1}}.
\end{align}
This is rather trivial, as $\T_\delta$, corresponding to a unitary transformation, is invertible.
The real issue that we want to explore is the assignment $\O\mapsto\O'$ for a general probability distribution $w(g)$.

To go through the discussion, we employ a decomposition of Alice's operation into four parts, exploiting the matrix representation of the density operator. A quantum operation can be written in terms of Kraus operators:
\begin{align}
\O[\r]=\sum_i K_i\r K_i^\dagger.
\end{align}
The linearity of the Kraus representation allows us to split the density matrix into a diagonal matrix, $\hat d$, and a vanishing-diagonal matrix $\hat f$ via two operations $\D$ and $\F$ (for instance $\left(\begin{smallmatrix} a&b\\ c&d \end{smallmatrix}\right)=\left(\begin{smallmatrix} a&0\\ 0&d \end{smallmatrix}\right)+\left(\begin{smallmatrix} 0&b\\ c&0 \end{smallmatrix}\right)$). Equivalently, we have two contributions to the final result:
\begin{align}
\label{decomp}
\O=\O\D+\O\F,
\end{align}
where $\O\D[\r]$ and $\O\F[\r]$ produce, respectively, the matrices $m_1$ and $m_2$, such that $m_1+m_2=\O(\D+\F)[\r]=\O[\r]=\r'$. Note that in general neither of $m_1$ or $m_2$ are proper quantum states, but this is not of our concern as their sum is always the proper quantum state at the output of $\O$.

The reason for this decomposition is that it allows us to study the effect of twirling on the two contributions separately: 
\begin{align}
\label{twodiagrams}
&\xymatrix@R=.5em@C=2em{
&m_1\ar[dd]^{\T_w}\\
\hat d\ar[ur]^{\O}\ar[dr]_{\O'}\ar@(ul,dl)[]_{\T_w}&&\\
&\T_w[m_1]
}
\hspace{0em}
\xymatrix@R=2em@C=2em{
\hat f\ar[r]^{\O}\ar[d]_{\T_w}&m_2\ar[d]^{\T_w}\\
 \T_w[\hat f]\ar[r]_{\O'}&\T_w[m_2]
}
\end{align}
A diagonal matrix is left unchanged under any twirling, i.e. $\T_w\D=\D$, so we can represent the twirling of a diagonal matrix with an arrow on itself.

\subsection{The $\O\mapsto\O'$ assignment}
With reference to the first diagram in \eqref{twodiagrams}, obtained by applying the map $\D$ to the initial state $\r$, we immediately obtain the commutativity relation $\O'\T_w\D=\T_w\O\D$, with no additional restrictions on $\r$ or on $\O$.
The second diagram, instead, obtained by applying the map $\F$ to the initial state $\r$, induces a restriction on the kind of operations $\O$ that allow commutativity: here the twirling has to be at least partially invertible.

If we consider the twirling action on the entries of a density matrix, eq. \eqref{fourier}, we see that we can retrieve individual off-diagonal entries where $\tilde w_{m-n}\neq0$ by entry-wise multiplication by the inverse $\tilde w_{m-n}^{-1}$. We refer to the action of inverting the twirling where $\tilde w_{m-n}\neq0$ as \emph{partial twirling inversion} and we indicate it as $\tilde \T_w^{-1}$. In a non-ideal situation, the condition should be generalised to $|\tilde w_{m-n}|\geq b$ where $b$ is a lower bound determined by the experimental conditions. Note that twirling is generally not invertible, and that partial twirling inversion is not a positive operation.

From the points raised above, we see that the second diagram in \eqref{twodiagrams} commutes if the final result $\T_w[m_2]$ is independent of the entries of $\hat f$ that are irremediably lost after the twirling $\T_w$. We can annihilate only those entries by applying the map $(\tilde\T_w^{-1}\T_w)$, and create the requirement for commutativity: $\T_w\O\F=\T_w\O(\tilde\T_w^{-1}\T_w)\F$. Comparison with the commutativity expression $\O'\T_w\F=\T_w\O\F$, immediately gives $\O'=\T_w\O\tilde\T_w^{-1}$. We can confirm that commutativity of the first diagram does not imply any restriction, as $\T_w\O\D=\T_w\O(\tilde\T_w^{-1}\T_w)\D$ is always satisfied. This means that the commutativity requirement of the second diagram is equivalent to $\T_w\O=\T_w\O(\tilde\T_w^{-1}\T_w)$.

The other consequence is that the commutativity of both diagrams in \eqref{twodiagrams}, together, imply commutativity of the diagram \eqref{diagram1}: $\O'\T_w\D+\O'\T_w\F=\T_w\O\D+\T_w\O\F\Rightarrow\O'\T_w=\T_w\O$, as it should be, and our proof is complete. We are now ready to state the following theorem:

\begin{theorem}
With reference to the definitions of twirling $\T_w$ and of its partial inversion $\tilde \T_w^{-1}$, the operations $\O$ that satisfy $\T_w\O=\T_w\O(\tilde \T_w^{-1}\T_w)$ make the diagram \eqref{diagram1} commutative, with
\begin{align}
\O'=\T_w\O\tilde \T_w^{-1}.
\end{align}
\end{theorem}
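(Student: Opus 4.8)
The plan is to prove the theorem by the decomposition strategy already set up in the excerpt, namely to verify that commutativity of the full diagram \eqref{diagram1} follows from commutativity of the two sub-diagrams in \eqref{twodiagrams}, and then to show that the stated hypothesis $\T_w\O=\T_w\O(\tilde\T_w^{-1}\T_w)$ is exactly the condition that makes both sub-diagrams commute with the assignment $\O'=\T_w\O\tilde\T_w^{-1}$. First I would invoke the linear splitting $\O=\O\D+\O\F$ from \eqref{decomp} together with the fact that every map in sight ($\T_w$, $\O$, $\tilde\T_w^{-1}$) is linear, so that it suffices to check the relation $\O'\T_w=\T_w\O$ separately on the diagonal part $\hat d=\D[\r]$ and on the off-diagonal part $\hat f=\F[\r]$, and then add.

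Next I would treat the diagonal channel. Here the key input is the identity $\T_w\D=\D$, i.e. a diagonal matrix is a fixed point of twirling. I would use this to show that the candidate assignment automatically satisfies the diagonal commutativity: substituting $\O'=\T_w\O\tilde\T_w^{-1}$ gives $\O'\T_w\D=\T_w\O(\tilde\T_w^{-1}\T_w)\D$, and since $\T_w\D=\D$ the combination $\tilde\T_w^{-1}\T_w$ acts as the identity on a diagonal matrix (the diagonal Fourier coefficient $\tilde w_0=1$ is never annihilated), so $\O'\T_w\D=\T_w\O\D$ with no restriction whatsoever. This reproduces the first diagram in \eqref{twodiagrams}.

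Then I would turn to the off-diagonal channel, which is where the hypothesis bites. Using the entry-wise description \eqref{fourier}, the operator $\tilde\T_w^{-1}\T_w$ acts on $\hat f$ as a projector: it preserves exactly those off-diagonal entries with $\tilde w_{m-n}\neq0$ and annihilates the rest (the entries irremediably lost under $\T_w$). I would argue that the second diagram commutes if and only if $\T_w\O\F=\T_w\O(\tilde\T_w^{-1}\T_w)\F$, i.e.\ the output of $\O$ seen through $\T_w$ does not depend on the lost entries of $\hat f$; comparing this with the required commutativity $\O'\T_w\F=\T_w\O\F$ and using the recoverability identity $\tilde\T_w^{-1}\T_w\F\,(\cdots)=\F(\cdots)$ on the surviving entries then forces $\O'=\T_w\O\tilde\T_w^{-1}$. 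Finally I would combine the two channels: adding the diagonal relation $\O'\T_w\D=\T_w\O\D$ and the off-diagonal relation $\O'\T_w\F=\T_w\O\F$ and invoking $\D+\F=\mathbb{1}$ yields $\O'\T_w=\T_w\O$, which is commutativity of \eqref{diagram1}, completing the proof.

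The main obstacle I anticipate is making the phrase ``partial twirling inversion'' fully rigorous as an operator identity rather than an entry-wise recipe: one must check that $\tilde\T_w^{-1}\T_w$ is a genuine well-defined linear map (a spectral projector onto the span of the recoverable matrix units $|m\rangle\langle n|$ with $\tilde w_{m-n}\neq0$) and that the hypothesis $\T_w\O=\T_w\O(\tilde\T_w^{-1}\T_w)$ is precisely the statement that $\O$ never routes information into the kernel of $\T_w$ in a way that survives the final twirling. Care is also needed because $\tilde\T_w^{-1}$ is not positive, so $\O'$ need not be a completely positive map; I would note that this is expected and does not affect the purely algebraic commutativity being claimed, since the theorem asserts a relation among linear maps, not that $\O'$ is itself a legitimate quantum operation.
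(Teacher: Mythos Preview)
Your proposal is correct and follows essentially the same route as the paper: split via $\D+\F$, use $\T_w\D=\D$ to get the diagonal commutativity for free, impose $\T_w\O\F=\T_w\O(\tilde\T_w^{-1}\T_w)\F$ on the off-diagonal part to read off $\O'=\T_w\O\tilde\T_w^{-1}$, and then sum. Your additional remarks on $\tilde\T_w^{-1}\T_w$ as a projector and on $\O'$ not being completely positive go slightly beyond what the paper spells out, but they are consistent with its informal treatment.
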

\begin{proof}
See above discussion.
\end{proof}

We point out that the matter of distinguishability of operations is an interesting and closely related issue: the assignment $\O\mapsto\O'$ in general is not injective, so there might well be a many-to-one assignment, in which case Bob identifies operations that are different for Alice. This is an issue that has to be addressed in matters regarding superselection.

Another issue that we should mention regards the state-dependence of the commutativity condition: one could just consider states that already lack enough coherence so that partial inversion of twirling is enough to retrieve the initial state. We have that for those states $(\tilde \T_w^{-1}\T_w)=\mathrm{id}$, so the representability condition is always satisfied, which means that whatever $\O$ Alice chooses, there will be an allowed choice of $\O'$ for Bob. However, one should be careful, as the states that make this possible can be quite dull (an example could be randomly rotating a state to make it rotationally invariant). On the other hand, particular instances of this issue might still be interesting: for instance if $w(g)$ is such that twirling affects coherences between particular sets of eigenstates, employing states that do not exhibit such coherences can be beneficial, as they can still maintain coherence between all the other eigenstates, in a decoherence-free subspace fashion.

\subsection{Example}
As an example, we work out the representation of an operation in case of minimal knowledge $w(g)=1$, inducing a total twirling $\T$ and compare it to a previous result \cite{BRS}. In this extreme case, all Fourier coefficients are zero apart from the zeroth one (which acts on the diagonal elements and its value is 1), so $\T$ is the least locally invertible twirling, and it's equivalent to $\D$, so we can conveniently replace any occurrence of $\D$ with $\T$. Commutativity of the first diagram in \eqref{twodiagrams} means $\O'\T=\T\O\T$. Then, notice that as $(\tilde \T^{-1}\T)=\T$, the commutativity requirement is $\T\O=\T\O\T$, so the commutativity of the first diagram implies commutativity of the total diagram \eqref{diagram1}: $\O'\T=\T\O\T=\T\O$. Finally, we have $\O'=\T\O=\T\O\T$, because total twirling is not invertible.

This result is exactly the same that would be reached by applying the prescription in \cite{BRS,GS}. In that reference, in case of total twirling, an operation $\O$ is prescribed to be mapped to
\begin{align}
\O' = \int_G dg\,\U_g\O\U_{g^{-1}}.
\label{twirlOp}
\end{align}
However, we aim now at showing that this prescription is not as universal as it was intended. Considering the following diagram
\begin{align}
\label{diagram4}
\xymatrix@R=2em@C=1em{
\r\ar[rr]^\O\ar[d]_{\T}&&\r'\ar[d]^{\T}\\
\s\ar[rd]_{\O'}&&\s'\\
&\s''&
}
\end{align}
we can look for the requirement that makes $\s''=\s'$, i.e. that $(\O'\T)[\r]=(\T\O)[\r]$:
\begin{align}
\s''&=(\O'\T)[\r]=\int_Gdg\,dh\,\U_g\O\U_{g^{-1}h}[\r]\nonumber\\
&=\int_Gdg\,\U_g\O\T[\r]=(\T\O\T)[\r].
\end{align}
So $\s'=\s''$ and the diagram commutes if $\T\O\T=\T\O$. To no surprise, the commutativity condition is the same that we found by applying the theorem to the example.
However, in the reference \cite{BRS}, the authors omit it, and one is lead to believe that any operation $\O$ that Alice performs always has a corresponding operation $\O'$ for Bob.

It is even more instructive to explicitly show a counterexample to the operation assignment \eqref{twirlOp}. Consider a qubit state, and a total twirling generated by rotations around the $\hat z$ axis. The operation that we use as counterexample needs to violate the requirement $\T\O\T=\T\O$. An example is a rotation about $\hat x$ (of an angle $\theta\neq k\pi$). With reference to diagram \eqref{diagram4}, we have (remember that $\T$ annihilates the off-diagonal values)
\begin{align}
\s'&=\T [e^{-i\frac{\theta}{2}\s_x}\r \,e^{i\frac{\theta}{2}\s_x}]\\
&=\frac{1}{2}\begin{pmatrix}1+(2p-1)\cos\theta-2\Im(b)\sin\theta&0\\\hspace{-3em}0&\hspace{-5em}1-(2p-1)\cos\theta+2\Im(b)\sin\theta\end{pmatrix}\nonumber
\end{align}
for some angle $\theta$. Instead, if we proceed thorugh the diagram the other way, we obtain
\begin{align}
\s''&=\O'\T\biggl[\begin{pmatrix}p&b\\b^*&1-p\end{pmatrix}\biggr]=\O'\biggl[\begin{pmatrix}p&0\\0&1-p\end{pmatrix}\biggr]\nonumber\\
&=\frac{1}{2}\begin{pmatrix}1+(2p-1)\cos\theta&0\\0&1-(2p-1)\cos\theta\end{pmatrix}
\end{align}
So the prescription \eqref{twirlOp} fails in this case, i.e. $\s'\neq\s''$, because the operation that we chose is not representable after the twirling, as it does not satisfy the commutativity requirement. We also see that a general rotation about $\hat x$ is representable after a total $\hat z$-twirling only for states that have $\Im(b)=0$, and this is an example of state-dependent representability.

\section{Conclusion}
We have shown the composition rule and the associativity property of generalised twirling. Then, we have found the requirement for an operation $\O$ to be faithfully mapped to an operation $\O'$ under a generalised twirling. 
We have also shown that the results in \cite{BRS,GS} which inspired our study, lack a way of telling which operations can be represented under a twirling. Under the framework established in this work, the same operations would instead be correctly identified as non-representable.
These results can help characterise quantum communication schemes, or understand the resources needed to lift a superselection rule, or study implementations of data hiding protocols, and similar issues that are affected by or rely on a lack of knowledge.

\section{Acknowledgements}
The author thanks Stephen Barnett, Daniel Oi and Davide Rinaldi for discussions. This work was supported by the UK EPSRC and the Canada Excellence Research Chairs (CERC) Program.

\bibliography{Twirling}

\end{document}